\DeclareMathOperator*{\argmin}{arg\,min}
\newtheorem{theorem}{Theorem}
\newtheorem{proposition}{Proposition}
\renewcommand{\vec}[1]{\boldsymbol{#1}}
\begin{document}
%

\title{Device Scheduling with Fast Convergence for Wireless Federated Learning }

\author{\IEEEauthorblockN{Wenqi Shi, Sheng Zhou, Zhisheng Niu}
        \IEEEauthorblockA{Beijing National Research Center for Information Science and Technology,\\Department of Electronic Engineering, Tsinghua University, Beijing 100084, China\\
        	Email: swq17@mails.tsinghua.edu.cn, \{sheng.zhou,niuzhs\}@tsinghua.edu.cn}
		}
\maketitle

\begin{abstract}
Owing to the increasing need for massive data analysis and model training at the network edge, as well as the rising concerns about the data privacy, a new distributed training framework called federated learning (FL) has emerged.
In each iteration of FL (called round), the edge devices update local models based on their own data and contribute to the global training by uploading the model updates via wireless channels.
Due to the limited spectrum resources, only a portion of the devices can be scheduled in each round.
While most of the existing work on scheduling focuses on the convergence of FL w.r.t. rounds, the convergence performance under a total training time budget is not yet explored.
In this paper, a joint bandwidth allocation and scheduling problem is formulated to capture the long-term convergence performance of FL, and is solved by being decoupled into two sub-problems.
For the bandwidth allocation sub-problem, the derived optimal solution suggests to allocate more bandwidth to the devices with worse channel conditions or weaker computation capabilities.
For the device scheduling sub-problem, by revealing the trade-off between the number of rounds required to attain a certain model accuracy and the latency per round, a greedy policy is inspired, that continuously selects the device that consumes the least time in model updating until achieving a good trade-off between the learning efficiency and latency per round.
The experiments show that the proposed policy outperforms other state-of-the-art scheduling policies, with the best achievable model accuracy under training time budgets.

\end{abstract}


%
\IEEEpeerreviewmaketitle

\section{Introduction}
According to the estimation by Cisco, nearly 850 zettabytes of data will be generated each year at the network edge by 2021 \cite{cisco}. Coupled with the rise of Deep learning \cite{goodfellow2016deep}, these valuable data can bring diverse artificial intelligence (AI) services to the end users. However, training AI models (typically deep neural networks) by  conventional centralized methods is impractical in many scenarios because: 1) uploading raw data to the centralized server via wireless channels will introduce a prohibitively high transmission cost; 2) the devices may hold private data, such as the phone call records, health conditions and location information, and uploading these data to the cloud will cause privacy issues\cite{chen2012data}.

Therefore, a new distributed model training framework called Federated Learning (FL) has emerged \cite{mcmahan2017communication}. In a typical FL system operated in wireless networks, the participating devices are coordinated by a base station (BS) and iteratively perform the local model updating and global aggregation. By updating the model parameters at the end devices, FL can leverage both the data and computation capabilities distributed in the network so as to get a better model with lower latency and preserve the data privacy. As a result, FL becomes a promising technology for distributed data analysis and model training in mobile networks \cite{park2018wireless}, and has been used in many applications such as resource allocation in vehicle-to-vehicle (V2V) communications \cite{samarakoon2018federated} and content suggestions for smartphones \cite{bonawitz2019towards}

However, some key challenges need to be tackled to implement FL in real wireless networks.
Due to the limited wireless resources and stringent delay requirement of FL, only a portion of devices are allowed to upload local models in each round.
Thus the device scheduling policy is critical to FL, and will affect the convergence performance in the following two aspects.
On one hand, the BS needs to wait until all scheduled devices have finished updating and uploading their local models in each round.
Therefore straggler devices with limited computation capabilities or bad channel conditions will significantly slow down the model aggregation.
As a result, scheduling more devices per round will not only reduce the bandwidth allocated to each device and also has higher probability of having a straggler.
On the other hand, scheduling more devices per round increases the convergence rate (w.r.t. the number of rounds) \cite{stich2019local, li2019convergence}, and potentially reduces the number of rounds required to attain the same accuracy.
To this end, the scheduling policy should carefully balance the latency and learning efficiency per round.
Moreover, the scheduling policy should also adapt to the highly dynamic and fluctuating wireless environment.

Recently, implementing FL in wireless networks has received many research efforts.
A branch of research focuses on the novel analog aggregation technologies, in which
the devices upload the updated models simultaneously over a multi-access wireless channel with analog modulation\cite{zhu2018low,amiri2019machine}.
Although the uploading latency can be greatly reduced, very stringent synchronization is required.
Along another series of work, the device scheduling problem is studied.
The convergence performance of FL w.r.t. rounds under three basic scheduling polices, namely random scheduling, round-robin and proportional fair, is analyzed in \cite{yang2019scheduling},
and an energy efficient joint bandwidth allocation and scheduling policy is proposed in \cite{zeng2019energy}.
However, the convergence rate w.r.t. time, which is critical for real-world FL applications, has not been addressed.
To accelerate the FL training, authors of \cite{nishio2019client} propose to  schedule the maximum number of devices in a given time budget for each round while discard the stragglers.
Nevertheless, the time budget is chosen through experiments and can hardly be adjusted under highly-dynamic FL systems.

In this paper, we aim to maximize the convergence
rate of the FL training w.r.t. time rather than rounds. Specifically, we formulate a joint bandwidth allocation and scheduling problem to minimize the expected time for FL training to attain certain model accuracy, and the problem is solved by decoupling into bandwidth allocation and device scheduling sub-problems.
For the bandwidth allocation problem, assuming a given set of scheduled devices, the implicit optimal policy that minimizes the time needed for the current round is derived. We further design an efficient binary search algorithm to get the numerical solution. For device scheduling, a greedy policy is proposed to schedule as many devices as possible to achieve the best trade-off between the latency and learning efficiency per round. Experiments show that the proposed policy outperforms other state-of-the-art scheduling policies in terms of the highest achievable model accuracy under a given training time budget.

\section{System Model}
We consider an FL system consisting one BS and $M$ end devices, denoted by $\mathcal{M}=\{1,2,\dots,M\}$. Each device $i$ has a local data set $\mathcal{D}_i=\{\vec{x}_{i,n}\in \mathbb{R}^s, y_{i,n} \in\mathbb{R}\}_{n=1}^{D_i}$, with $D_i=|\mathcal{D}_i|$ data samples.
Here $\vec{x}_{i,n}$ is the $n^{\text{th}}$ $s$ dimensional input data vector at device $i$, and $y_{i,n}$ is the labeled output of $\vec{x}_{i,n}$.
The whole data set is denoted by $\mathcal{D} = \mathop{\cup} \limits_{i \in \mathcal{M}}\mathcal{D}_i$ with total number of samples $D = \sum \limits_{i \in \mathcal{M}}D_i$, and we assume that all local data sets are non-overlapping with each other.

The goal of the training process is to find the model parameter $\vec{w}$, so as to minimize a particular loss function on the whole data set. The  problem can be expressed as
\begin{equation}
    \min \limits_{\vec{w}} J(\vec{w}) \triangleq
    \min \limits_{\vec{w}} \left\{\frac{1}{D}\sum_{i\in \mathcal{D}} f_i(\vec{w})\right\}, \label{GL}
\end{equation}
where the local loss function $f_i(\vec{w})$ on the data set $\mathcal{D}_i$ is defined as
$f_i(\vec{w}) \triangleq \frac{1}{D_i}\sum_{n \in \mathcal{D}_i} f(\vec{w}, \vec{x}_{i,n}, y_{i,n})$,
and $f(\vec{w}, \vec{x}_{i,n}, y_{i,n})$ captures the error of the model parameter $\vec{w}$ on the training input-output pair $\{\vec{x}_{i,n}, y_{i,n}\}$.

\subsection{Federated Learning over Wireless Network}
FL uses an iterative approach to solve problem \eqref{GL},
and each round, indexed by $k$, contains the following 3 steps.
\begin{enumerate}
    \item The BS broadcasts the current global model $\vec{w}^k$ to all scheduled devices (denoted by $\Pi^k  \subset \mathcal{M}$) that participate the current round.
    \item Each device $i\in \Pi^k$ updates its local model by applying the gradient decent algorithm on its local data set (i.e., $\vec{w}_i^{k+1}=\vec{w}^k-\eta \nabla f_i(\vec{w}^k)$), and uploads the updated model $\vec{w}_i^{k+1}$ to the BS.
    \item After receiving all the uploaded models, the BS aggregates them (i.e., averages the uploaded local models) to generate a new global model
    $\vec{w}^{k+1} = \frac{1}{|\Pi^k|}\sum_{i\in\Pi^k}\vec{w}_i^{k+1}$.
\end{enumerate}

It has been shown in \cite{stich2019local} that for training on the local data sets $\mathcal{D}_i$ that follow i.i.d. distribution, increasing the participating devices can linearly speedup the convergence rate. While for more practical scenarios in FL, the local data sets can be non-i.i.d. \cite{zhao2018federated}. The relation between the number of participating devices and the convergence rate becomes non-linear. As shown in \cite{li2019convergence}, the number of  rounds required to attain a certain model accuracy is on the order of $\mathcal{O}((1+\frac{1}{|\Pi^k|})G + \Gamma)$, where $G$ and $\Gamma$ are parameters that relate to the FL configurations and data distribution. Since different FL applications have various data distribution characteristics, we use the following expression to approximate the number of rounds required to attain certain accuracy, that can adapt to both i.i.d. and non-i.i.d. data distributions
\begin{equation}
N(\Pi^k) = \beta\left(\theta + \frac{1}{|\Pi^k|}\right),
\label{app}
\end{equation}
where the parameters $\theta$ and $\beta$ can be determined through experiments.

\subsection{Latency Model}
We consider an arbitrary round $k$ in the rest of this paper and omit the index $k$ without loss of generality.
\subsubsection{Computation Latency}

To characterize the randomness of the computation latency of local model updating, we use the shifted exponential distribution\cite{lee2017speeding}:
\begin{equation}
\mathbb{P}[t_i^{\text{cp}}<t]=
\begin{cases}
1-e^{-\frac{ \mu_i}{D_i}(t-a_i D_i)} & \text{, $t\geq a_i D_i$,} \\
0 & \text{, otherwise,}
\end{cases}
\label{shifted_exponential}
\end{equation}
where $a_i>0$ and $\mu_i>0$ are parameters that indicate the fluctuation and maximum of the computation capabilities, respectively.
Moreover, we ignore the computation latency of the model aggregation at the BS, because of the relatively stronger computation capability of the BS and low complexity of the model aggregation.

\subsubsection{Communication Latency}
Regarding the local model uploading phase of the scheduled devices, we consider an OFDMA system with total bandwidth $B$. The bandwidth allocated to device $i$ is denoted by $\gamma_i B$, where $\gamma_i$ is the allocation ratio that satisfies $\sum_{i=1}^{M} \gamma_i \leq 1$, and $0 \leq \gamma_i \leq 1$. Therefore, the achievable transmission rate (bits/s) can be written as
$r_i  = \gamma_i B \text{log}_2\left(1+\frac{p_i h_i^2}{N_0}\right)$,
where $p_i$ denotes the transmit power density (Watt/Hz) and is assumed the same over the whole bandwidth, and $h_i$ denotes the corresponding channel gain, and $N_0$ is the noise power. Thus the communication latency of device $i$ is
\begin{equation}
  t_i^{\text{cm}} = \frac{S_\text{model}}{r_i},
\end{equation}
where $S_\text{model}$ denotes the size of $\Vec{w}_i$, in bits.
Since the transmit power of the BS is much higher than that of the devices and the whole downlink bandwidth is used by BS to broadcast the model, we ignore the latency of broadcasting the global model.

\subsubsection{Total Latency per Round}
Due to the synchronous model aggregation of FL, the total latency per round $t^{\text{round}}(\Pi)$ is determined by the slowest device among all the scheduled devices, as thus,
\begin{align}
    t^{\text{round}}(\Pi) = \max_{i\in\Pi} \{t_i^{\text{cm}}+t_i^{\text{cp}}\}.
\end{align}

\section{Joint Bandwidth Allocation and Scheduling}
We consider a joint bandwidth allocation and scheduling problem in order to maximize the convergence rate w.r.t. time. Because the model accuracy statistically increases during the training, it is equivalent to minimize the total latency of FL to attain certain model accuracy, which can be denoted by the number of required rounds (i.e., $N(\Pi)$, given by eq.\eqref{app}) times the latency per round (i.e., $t^{\text{round}}(\Pi)$).

\begin{equation*}
\begin{aligned}
    & (\textrm{P1}) \quad \underset{\Pi, \gamma_i}{\text{min}}
    & & \beta\left(\theta+\frac{1}{|\Pi|}\right) \underset{i\in \Pi}{\text{max}}\left\{\frac{S_\text{model}}{\gamma_i B \text{log}_2\left(1+\frac{p_i h_i^2}{N_0}\right)}
    + t_i^\text{cp}\right\} \\
    & \qquad \quad \text{s.t.}
    & & \Pi \subset \mathcal{M}, \\
    & & & \sum_{i=1}^{M} \gamma_i \leq 1, \\
    & & &  0 \leq \gamma_i \leq 1.
    \label{P1}
\end{aligned}
\end{equation*}

The optimization problem (P1) is hard to solve due to the first constraint and the \textit{max} term in the objective function. Nevertheless, we solve the problem (P1) by decoupling it into two sub-problems so as to inspire good heuristic algorithms.

\subsection{Bandwidth Allocation}
First, we consider the bandwidth allocation problem when the scheduling decision $\Pi$ is given. The sub-problem can be written as follows
\begin{equation*}
\begin{aligned}
    & (\textrm{P2}) \quad \underset{\gamma_i}{\text{min}}
    & & \underset{i\in \Pi}{\text{max}}\left\{\frac{S_\text{model}}{\gamma_i B \text{log}_2\left(1+\frac{p_i h_i^2}{N_0}\right)}
    + t_i^\text{cp}\right\} \\
    & \qquad \quad \text{s.t.}
    & & \sum_{i\in\Pi} \gamma_i \leq 1, \\
    & & & 0 \leq \gamma_i \leq 1.
    \label{P2}
\end{aligned}
\end{equation*}

By solving the problem (P2), the BS can derive the optimal bandwidth allocation (i.e., $\gamma_i$) to minimize the latency of the current round, given the scheduling policy.

\begin{theorem}
The optimal solution of (P2) is as follows
\begin{equation}
   \gamma_i = \frac{S_{\rm{model}}}{(t^*(\Pi)-t_i^{\rm{cp}})B {\rm{log}}_2\left(1+\frac{p_i h_i^2}{N_0}\right)},
   \label{eq2}
\end{equation}
where $t^*(\Pi)$ is the objective value of (P2) that satisfies
\begin{equation}
   \sum_{i\in\Pi} \frac{S_{\rm{model}}}{(t^*(\Pi)-t_i^{\rm{cp}})B {\rm{log}}_2\left(1+\frac{p_i h_i^2}{N_0}\right)} = 1.
   \label{t*}
\end{equation}
\end{theorem}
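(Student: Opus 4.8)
The plan is to reduce (P2) to an equivalent \emph{equalization} problem via two monotonicity (exchange) arguments, and then to solve the resulting algebraic system. Write $R_i \triangleq B\log_2\left(1+\frac{p_ih_i^2}{N_0}\right)$, so that the per-device latency $g_i(\gamma_i) \triangleq \frac{S_{\text{model}}}{\gamma_i R_i}+t_i^{\text{cp}}$ is continuous and strictly decreasing in $\gamma_i$ on $(0,1]$, and the objective of (P2) is $\max_{i\in\Pi} g_i(\gamma_i)$. Note that any feasible point must have $\gamma_i>0$ for every $i\in\Pi$, as otherwise $g_i=+\infty$.

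First I would establish the structural claim that at any optimum $\{\gamma_i^*\}$ with value $t^*$ all the per-device latencies are equal (to $t^*$) and the bandwidth budget is tight, i.e.\ $\sum_{i\in\Pi}\gamma_i^*=1$. For the first part, suppose the set $A\subseteq\Pi$ of devices attaining the maximum were a proper subset. Since the devices not in $A$ have latency strictly below $t^*$, I would shrink their allocations by a small total amount $\Delta>0$ and redistribute $\Delta$ equally among the devices in $A$; by continuity the devices outside $A$ stay below $t^*$, while strict monotonicity pushes every device in $A$ strictly below $t^*$, contradicting optimality (one checks that when $A\subsetneq\Pi$ no device in $A$ can already sit at $\gamma_i^*=1$, since that would force $\sum_j\gamma_j^*=1$ with all other $\gamma_j^*=0$, hence $g_j=+\infty$, hence $\Pi=\{i\}$ and $A=\Pi$; so the perturbation is feasible). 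Hence $A=\Pi$ and $g_i(\gamma_i^*)=t^*$ for all $i$. For the second part, if $\sum_i\gamma_i^*=\rho<1$, then bumping every $\gamma_i^*$ by $(1-\rho)/|\Pi|$ is still feasible (the new sum is $1$, and $\gamma_i^*+(1-\rho)/|\Pi|\le\rho+(1-\rho)=1$ because $\gamma_i^*\le\rho$) and strictly decreases every $g_i$, again a contradiction.

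Given the claim, setting $g_i(\gamma_i)=t^*$ for all $i\in\Pi$ and solving for $\gamma_i$ yields $\gamma_i=\frac{S_{\text{model}}}{(t^*-t_i^{\text{cp}})R_i}$, which is exactly \eqref{eq2}; this is well defined because $t^*=g_i(\gamma_i^*)\ge g_i(1)=\frac{S_{\text{model}}}{R_i}+t_i^{\text{cp}}>t_i^{\text{cp}}$. Substituting these expressions into the tight budget $\sum_{i\in\Pi}\gamma_i=1$ gives \eqref{t*}. To see that \eqref{t*} pins down $t^*$ uniquely, I would define $\phi(t)\triangleq\sum_{i\in\Pi}\frac{S_{\text{model}}}{(t-t_i^{\text{cp}})R_i}$ on the interval $t>\max_{i\in\Pi}t_i^{\text{cp}}$ and note that it is continuous and strictly decreasing, with $\phi(t)\to+\infty$ as $t$ decreases to $\max_i t_i^{\text{cp}}$ and $\phi(t)\to 0$ as $t\to\infty$; hence $\phi(t)=1$ has a single root, which is the objective value $t^*(\Pi)$. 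Feasibility of the resulting allocation is then immediate: each $\gamma_i>0$, and since they sum to $1$ each $\gamma_i\le1$.

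The step I expect to be the only delicate one is the equalization argument, specifically ruling out the corner cases — several devices tied at the maximum, and a device already using its full admissible share $\gamma_i=1$ — which is why the argument perturbs the whole active set $A$ simultaneously rather than one device and checks feasibility of the bump explicitly. Everything afterward (inverting the relations $g_i(\gamma_i)=t^*$ and invoking monotonicity of $\phi$) is routine. As an alternative route one could cast (P2) in epigraph form, $\min\{t:\ \frac{S_{\text{model}}}{\gamma_iR_i}+t_i^{\text{cp}}\le t,\ \sum_{i\in\Pi}\gamma_i\le1,\ 0\le\gamma_i\le1\}$, which is a convex program, and read off \eqref{eq2}--\eqref{t*} from its KKT conditions; I would keep the elementary exchange argument as the primary proof since it is shorter and self-contained.
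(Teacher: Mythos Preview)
Your proof is correct and follows essentially the same equalization/exchange idea as the paper: at the optimum every scheduled device must finish at the same time (else reassigning bandwidth from early finishers to the straggler lowers the max), and then the tight budget yields \eqref{eq2}--\eqref{t*}. Your version is simply more careful than the paper's one-line argument --- you explicitly handle the corner cases, verify budget tightness, and establish uniqueness of $t^*(\Pi)$ via monotonicity of $\phi$ --- but the route is the same.
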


\begin{proof}
If any device has finished the model updating earlier than other devices, we can reassign its bandwidth to other devices. As a result, the round latency can be shortened. Based on this observation, the optimal solution of (P2) can be achieved if and only if all devices finish updating at the same time. Thus the optimal solution and corresponding objective value is given by the following equations
\begin{equation*}
\left\{
\begin{array}{l}
\frac{S_{\rm{model}}}{\gamma_i B {\rm{log}}_2\left(1+\frac{p_i h_i^2}{N_0}\right)} + t_i^{\rm{cp}} = t^*(\Pi) , \forall i \in \Pi, \\
\sum_{i\in\Pi} \gamma_i = 1 ,   \\
0 \leq \gamma_i \leq 1 .
\end{array}
\right.
\label{eq1}
\end{equation*}
\end{proof}
We need to solve a $|\Pi|$-order equation to get the explicit solution of (P2), which is impossible for $|\Pi| \geq 5$. Since a typical FL system will involve tens or even hundreds of devices in one round \cite{bonawitz2019towards},  we propose a binary search algorithm (Algorithm \ref{alg1}) to get the optimal value of (P2). Begin with the target value $T$ that equals to the upper bound of the initial searching region $[T_{\text{low}}, T_{\text{up}}]$,
we iteratively compute the needed bandwidth for the current target value $T$ according to \eqref{eq2} (steps 5, 6) and
halve the searching region according to whether the bandwidth satisfy the bandwidth constraint (steps 9-14). Given the precision requirement of the searching result (i.e., $\epsilon$), the complexity of the algorithm is on the order of $\mathcal{O}\left(|\Pi|\text{log}_2\left(\frac{T_\text{up}}{\epsilon}\right)\right)$.

\begin{algorithm}
    \caption{Binary Search for the Objective Value of (P2)}
    \label{alg1}
    \begin{algorithmic}[1]
    \renewcommand{\algorithmicrequire}{\textbf{Input:}}
    \renewcommand{\algorithmicensure}{\textbf{Output:}}
    \REQUIRE \quad
    \\ $S_\text{model}$, $B$, $p_i$, $h_i^2$, $N_0$, $t_i^\text{cp}$, $\Pi$, $\epsilon$
    \ENSURE  \quad \\ $T$
    \STATE {Give a big enough $T_\text{up}$, and $T_\text{low} = \text{max}\{t_i^\text{cp}\}, \forall i \in \Pi$}
    \STATE {$T = T_\text{up}$, $success = False$}
    \WHILE {NOT $success$}
        \STATE {$s=0$}
        \FOR{$i \in \Pi$}
            \STATE {$\gamma_i=\frac{S_\text{model}}{(T-t_i^\text{cp})\left[B \text{log}_2\left(1+\frac{p_i h_i^2}{N_0}\right)\right]}$}
            \STATE {$s=s+\gamma_i$}
        \ENDFOR
        \IF {$1-\epsilon \leq s \leq 1$}
            \STATE {$success = True$}
        \ELSIF {$s < 1-\epsilon$}
            \STATE {$T = \frac{T+T_\text{low}}{2}$, $T_\text{up} = T$}
        \ELSIF {$1 < s$}
            \STATE {$T = \frac{T+T_\text{up}}{2}$, $T_\text{low} = T$}
        \ENDIF
    \ENDWHILE
    \RETURN {$T$}
    \end{algorithmic}
\end{algorithm}

\subsection{Latency-Learning Efficiency Trade-off}

To capture the trade-off between the latency and learning efficiency per round, we have the following proposition.
\begin{proposition}
If $\Pi$ is randomly sampled from $\mathcal{M}$ and all devices are homogeneous (with local data set size $d$, transmit power $p$ and channel gain $h$), then the expectation of total updating latency per round can be bounded by
\begin{equation}
\begin{split}
    & ad + \frac{d}{|\Pi|\mu} + |\Pi|\mathbb{E}\left\{\frac{S_{\rm{model}}}{B{\rm{log}}_2(1+\frac{p h^2}{N_0})}\right\}
    \leq \mathbb{E}\{t^*(\Pi)\} \\
    \leq & ad+ \frac{|\Pi|d}{\mu}\sum_{i=1}^{|\Pi|}\frac{1}{i} + |\Pi|\mathbb{E}\left\{\frac{S_{\rm{model}}}{B{\rm{log}}_2(1+\frac{p h^2}{N_0})}\right\},
    \label{app_rt}
\end{split}
\end{equation}
where $a,\mu$ are the parameters of the shifted exponential computation latency distribution.
\end{proposition}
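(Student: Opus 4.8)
\medskip
\noindent\emph{Proof strategy.}
The starting point is Theorem~1: $t^*(\Pi)$ is the unique value $t>\max_{i\in\Pi}t_i^{\text{cp}}$ solving $\sum_{i\in\Pi}\frac{S_{\text{model}}}{(t-t_i^{\text{cp}})R_i}=1$, where $R_i\triangleq B\log_2(1+p h_i^2/N_0)$, and all the optimal $\gamma_i$ in \eqref{eq2} are strictly positive, so in particular $t^*(\Pi)-t_i^{\text{cp}}>0$ for every $i\in\Pi$. Under homogeneity I write, following \eqref{shifted_exponential}, $t_i^{\text{cp}}=ad+X_i$ with $X_1,\dots,X_{|\Pi|}$ independent $\mathrm{Exp}(\mu/d)$ variables (so $\mathbb{E}\{X_i\}=d/\mu$), and I let $\tau_i\triangleq S_{\text{model}}/R_i$ denote the full-bandwidth upload time of device $i$, whose common mean is exactly the term $\mathbb{E}\{S_{\text{model}}/(B\log_2(1+p h^2/N_0))\}$ in \eqref{app_rt}. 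Put $n\triangleq|\Pi|$; since the devices are homogeneous, only $n$ and these random quantities matter (the identities of the sampled devices are irrelevant).

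The plan is to sandwich each denominator $t^*(\Pi)-t_i^{\text{cp}}=\big(t^*(\Pi)-ad\big)-X_i$ between $\big(t^*(\Pi)-ad\big)-\max_j X_j$ and $\big(t^*(\Pi)-ad\big)-\min_j X_j$. Using $X_i\ge\min_j X_j$ in the defining equation gives $1\ge\frac{\sum_{i\in\Pi}\tau_i}{t^*(\Pi)-ad-\min_j X_j}$, and clearing the positive denominator yields $t^*(\Pi)\ge ad+\min_j X_j+\sum_{i\in\Pi}\tau_i$. Symmetrically, $X_i\le\max_j X_j$ gives $1\le\frac{\sum_{i\in\Pi}\tau_i}{t^*(\Pi)-ad-\max_j X_j}$ — here the denominator is positive precisely because $t^*(\Pi)>\max_i t_i^{\text{cp}}$ — whence $t^*(\Pi)\le ad+\max_j X_j+\sum_{i\in\Pi}\tau_i$. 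Both of these are now pathwise (deterministic) inequalities.

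It remains to take expectations and use linearity. The minimum of $n$ independent $\mathrm{Exp}(\mu/d)$ variables is $\mathrm{Exp}(n\mu/d)$, so $\mathbb{E}\{\min_j X_j\}=d/(n\mu)$; the expected maximum of $n$ such variables is the classical $\mathbb{E}\{\max_j X_j\}=(d/\mu)\sum_{i=1}^{n}1/i$. For the communication part, $\mathbb{E}\{\sum_{i\in\Pi}\tau_i\}=n\,\mathbb{E}\{\tau\}$ by homogeneity, and note that no independence between the $X_i$'s and the $\tau_i$'s is needed, because we only ever take the expectation of a sum. Substituting gives
\[
ad+\frac{d}{n\mu}+n\,\mathbb{E}\{\tau\}\ \le\ \mathbb{E}\{t^*(\Pi)\}\ \le\ ad+\frac{d}{\mu}\sum_{i=1}^{n}\frac{1}{i}+n\,\mathbb{E}\{\tau\},
\]
which establishes \eqref{app_rt} (in fact it sharpens the upper bound, since $\sum_{i=1}^{n}1/i\le n\sum_{i=1}^{n}1/i$).

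The argument is essentially mechanical once the sandwiching idea is in place; the only step needing a moment's care is the claim that $t^*(\Pi)-ad-\max_j X_j>0$, i.e.\ $t^*(\Pi)>\max_i t_i^{\text{cp}}$, which is what lets us clear the denominator to obtain the upper bound — this is immediate from Theorem~1 together with the strict positivity of the $\gamma_i$ in \eqref{eq2}. The remaining ingredients are just the two order-statistic means of the exponential distribution and linearity of expectation.
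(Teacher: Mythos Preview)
Your proposal is correct and follows essentially the same approach as the paper: bound each $t_i^{\mathrm{cp}}$ below by $t_{\min}^{\mathrm{cp}}$ (resp.\ above by $t_{\max}^{\mathrm{cp}}$) in the defining equation \eqref{t*}, clear the positive denominator, and then take expectations using the min and max order statistics of the shifted exponential. Your write-up is in fact more careful about the positivity of $t^*(\Pi)-\max_i t_i^{\mathrm{cp}}$, and your observation that the argument actually yields the sharper computation term $\frac{d}{\mu}\sum_{i=1}^{|\Pi|}\frac{1}{i}$ (rather than $\frac{|\Pi|d}{\mu}\sum_{i=1}^{|\Pi|}\frac{1}{i}$) in the upper bound is correct.
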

\begin{proof}
From \eqref{t*}, we have
\begin{equation}
\begin{split}
& \sum_{i\in\Pi} \frac{S_{\rm{model}}}{(t^*(\Pi)-t_{\rm{min}}^{\rm{cp}})B {\rm{log}}_2\left(1+\frac{p_i h_i^2}{N_0}\right)} \\
\leq
& \sum_{i\in\Pi} \frac{S_{\rm{model}}}{(t^*(\Pi)-t_i^{\rm{cp}})B {\rm{log}}_2\left(1+\frac{p_i h_i^2}{N_0}\right)} = 1,
\end{split}
\end{equation}
where $t_{\rm{min}}^{\rm{cp}}=\min\limits_{i\in\Pi}\{t_i^{\rm{cp}}\}$.
Thus
\begin{equation}
\begin{split}
    \mathbb{E}\{t^*(\Pi)\} &\geq
    \mathbb{E}\left\{t_{\rm{min}}^{\rm{cp}}+\sum_{i\in\Pi} \frac{S_{\rm{model}}}{B {\rm{log}}_2\left(1+\frac{p_i h_i^2}{N_0}\right)}\right\} \\
    &=\mathbb{E}\{t_{\rm{min}}^{\rm{cp}}\} + |\Pi|\mathbb{E}\left\{\frac{S_{\rm{model}}}{B{\rm{log}}_2(1+\frac{p h^2}{N_0})}\right\}.
    \label{lwb}
\end{split}
\end{equation}
By calculating the expectation of the smallest order statistic of $t_i^{\rm{cp}}$, we can get the lower bound of $\mathbb{E}\{t^*(\Pi)\}$.
The upper bound can be derived similarly.
\end{proof}

\begin{figure}[!t]
\centering
\subfloat[]{
\includegraphics[width=0.485\linewidth]{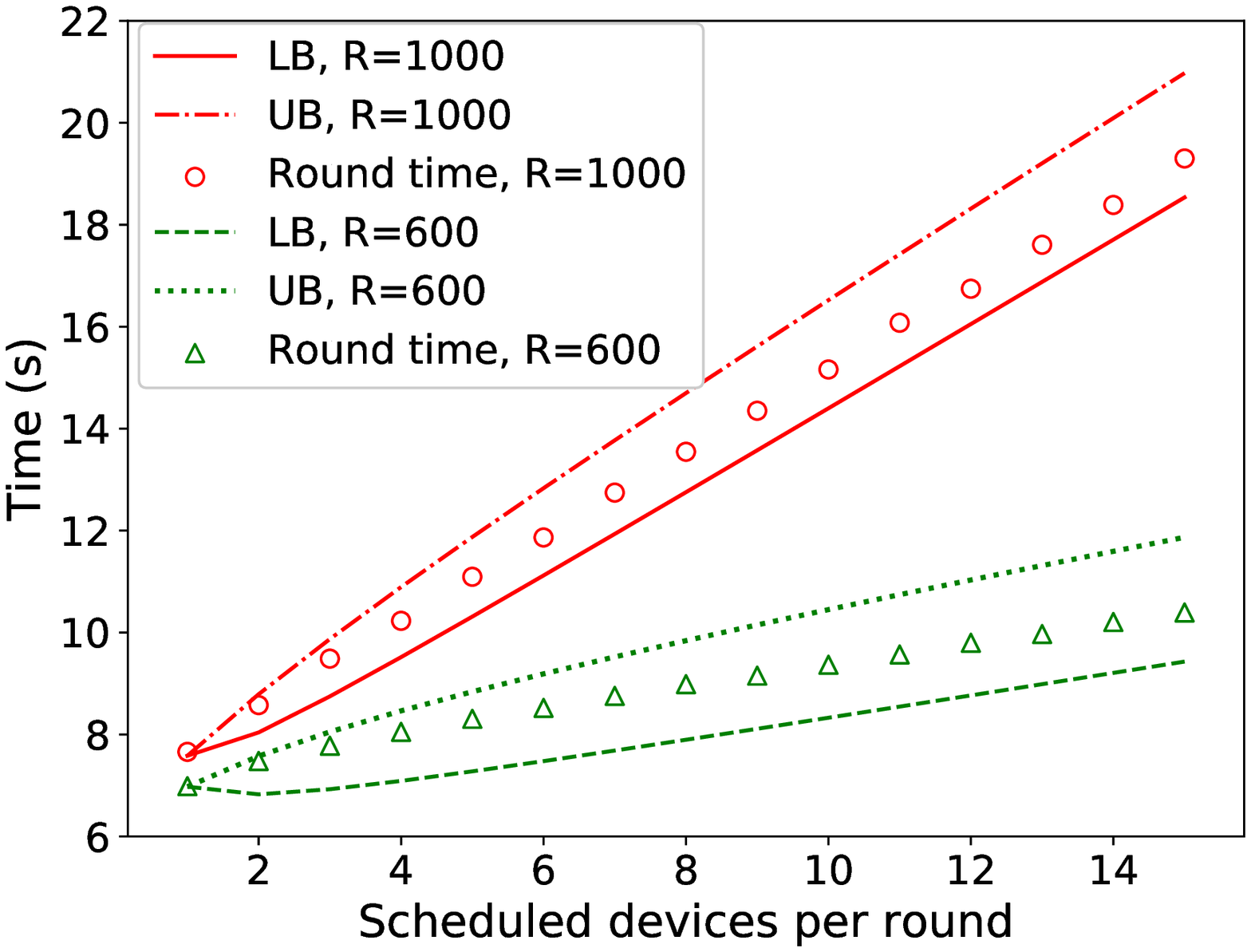}}
\label{random_C_vs_ET}\hfill
\subfloat[]{
\includegraphics[width=0.485\linewidth]{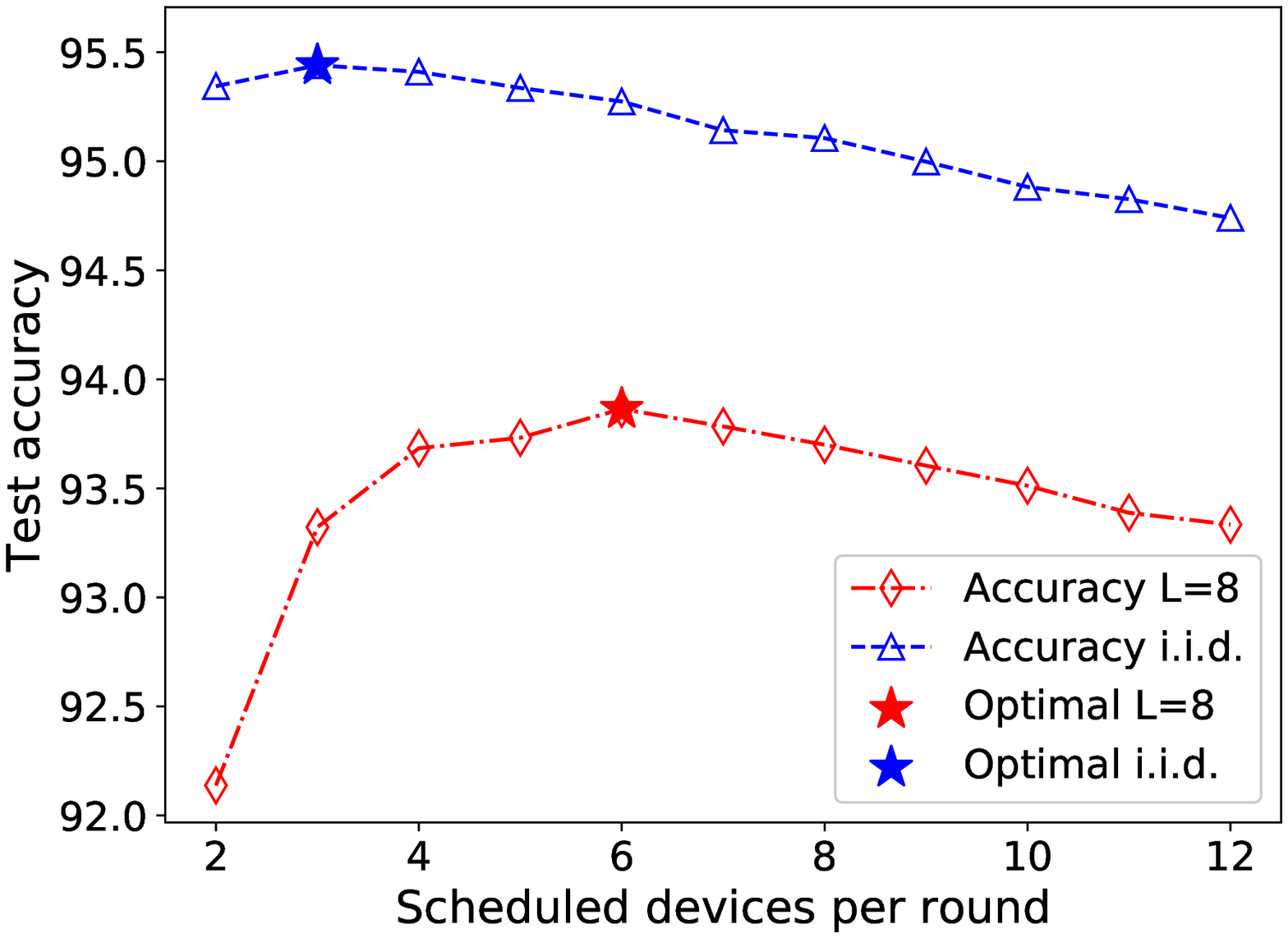}}
\label{random_C_vs_acc}
\caption{(a) The averaged latency per round v.s. the number of devices scheduled with the cell radius $R=600$ and 1000, respectively. The curves denote the bounds in \eqref{app_rt} and the scatters denote the averaged round latency given by Algorithm \ref{alg1} among 1000 trails.  (b) The best accuracy attained under a given training time budget v.s. the number of scheduled devices under different non-i.i.d. level training data sets ($L=8$ and i.i.d.). The time budget is 245 seconds and the results are averaged over 5 trails. For the detail experiment settings, please refer to the next section.}
\label{fig1}
\end{figure}

When $|\Pi|$ is large, the communication latency dominates $t^*(\Pi)$, thus $t^*(\Pi)$ grows almost linearly with $|\Pi|$ and
the lower bound can be used to approximate $t^*(\Pi)$ (as shown in Fig.~\ref{fig1}(a)).
As a result, the objective function in (P1) can be approximated by
\begin{equation}
    \beta\left(\theta+\frac{1}{|\Pi|}\right)\left(ad + \frac{d}{|\Pi|\mu} + |\Pi|\mathbb{E}\left\{\frac{S_{\rm{model}}}{B{\rm{log}}_2(1+\frac{p h^2}{N_0})}\right\}\right),
    \label{obj}
\end{equation}
that leads to a clear trade-off between the latency per round and the number of the rounds, that is: to schedule more devices so as to reach the target accuracy with fewer rounds, or to schedule fewer devices in order to reduce the latency per round.
Further, the trade-off is confirmed by Fig.~\ref{fig1}(b), that shows the highest achievable accuracy versus the number of randomly scheduled devices per round.

\subsection{Device Scheduling}

Given the optimal solution of the bandwidth allocation problem, the scheduling problem (P1) is still a combinatorial optimization problem, which is hard to solve. Therefore we propose a greedy algorithm (Algorithm \ref{alg2}) inspired by the observed latency-learning efficiency trade-off, to optimize device scheduling. In Algorithm \ref{alg2}, the device that consumes the least time in model updating and uploading is iteratively added to the scheduled devices set (step 3), until the objective function of the problem (P1) starts to increase (step 4). The order of Algorithm \ref{alg2} is $\mathcal{O}(|\mathcal{M}|^3)$ (because of calling Algorithm 1 for $\mathcal{O}(|\mathcal{M}|^2)$ times), which is much more efficient than the naive brute force search algorithm on the order of $\mathcal{O}(2^{|\mathcal{M}|})$.
\begin{algorithm}
    \caption{Greedy Scheduling Algorithm}
    \label{alg2}
    \begin{algorithmic}[1]
    \renewcommand{\algorithmicrequire}{\textbf{Input:}}
    \renewcommand{\algorithmicensure}{\textbf{Output:}}
    \REQUIRE \quad
    \\ All available devices in the current round $\mathcal{M} = \{1,2,\cdots,M\}$
    \ENSURE  \quad \\ $\Pi$
    \STATE {$\Pi \gets \emptyset$}
    \WHILE {$|\mathcal{M}|>0$}
        \STATE {$x \gets \argmin \limits _{k\in \mathcal{M}} t^\text{round}(\Pi \cup \{k\}) $}
        \IF {$\beta(\theta+\frac{1}{|\Pi|+1}) t^\text{round}(\Pi \cup \{x\}) > \beta(\theta+\frac{1}{|\Pi|}) t^\text{round}(\Pi)$}
            \STATE {Break}
        \ELSE
            \STATE {$\mathcal{M} \gets \mathcal{M} \setminus \{x\}$}
            \STATE {$\Pi \gets \Pi \cup \{x\}$}
        \ENDIF
    \ENDWHILE
    \RETURN {$\Pi$}
    \end{algorithmic}
\end{algorithm}

\section{Experiment Results}
In this section, we evaluate the training performance of FL under different scheduling policies.

\subsection{Environment and FL Setups}
Unless otherwise specified, we consider $M=20$ devices uniformly located in a cell of radius $R=1$ km, and a BS located at the center of the cell. Assume that all devices will be uniformly re-distributed in the cell at the beginning of each round to reflect mobility. The wireless bandwidth is $B=3$ MHz, and the path loss exponent is $\alpha = 3.76$. The transmit power spectrum density of devices is set to be $p_i = 7$ dBm/MHz, and the power spectrum density of the additive Gaussian noise is $N_0 = -114$ dBm/MHz.

We consider an FL task that classifies handwritten digits using the commonly used MNIST data set \cite{lecun1998gradient}, that has 60,000 training images and 10,000 testing images of the 10 digits. The training data samples are evenly partitioned into all devices (i.e., 3,000 local training images for each device) in the following two ways. For the \textit{i.i.d.} setting, each device is assigned a local training data set that uniformly sampled from all 10 digits. While for the \textit{non-i.i.d.} settings, each device uniformly samples its local data set from different subsets (consist of randomly selected $L$ digits) of the whole training data set, where the parameter $L$ captures the non-i.i.d. level of the local training data sets. We apply a standard multilayer perceptron (MLP) model with one hidden layer of 64 hidden nodes, and use ReLU activation. The MLP model has 50,890 weights, and the model size is about 1.6 megabits when quantized by 32-bits float. For local updating, the mini-batch size is 10 and the learning rate is 0.01. The scheduled devices will perform 1 epoch of local updating, and the computation latency is modeled by the shifted exponential distribution with $a=2$ ms/sample and $\mu=4$ sample/ms.

\subsection{Effect of the Number of Scheduled Devices}
\begin{figure}[!t]
\centering
\includegraphics[width=2.8in]{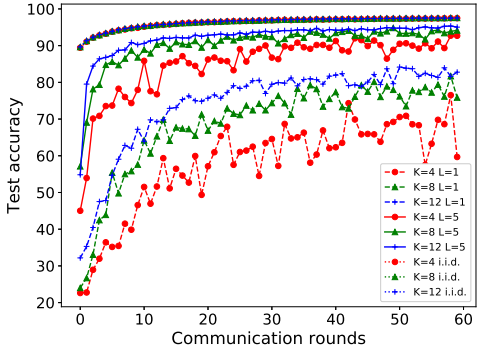}
\caption{The accuracy on the testing data set v.s. the number of rounds. We compare the test accuracy of different number of scheduled devices ($K=4, 8, 12$) under different non-i.i.d. level of the local training data sets ($L=1, 5$ or i.i.d.). Results are averaged over 5 independent trails.}
\label{fg1}
\end{figure}
The effect of the number of scheduled devices on the model accuracy w.r.t. non-i.i.d. level of the local training data sets is shown in Fig.~\ref{fg1}.
In general, the lower non-i.i.d. level of the local training data sets and the more scheduled devices in each round lead to a higher model accuracy within the same number of rounds. Specifically, scheduling 4 devices in each round can achieve only 68.2\% accuracy within 60 rounds on the high non-i.i.d. level data sets ($L=1$). While scheduling the same number of devices under low non-i.i.d. level data sets ($L=5$) and i.i.d. data sets can achieve 90.4\% and 97.4\% accuracy, respectively. Moreover, we notice that as the non-i.i.d. level decreases, the benefit of scheduling more devices will decrease. For highly non-i.i.d. data sets ($L=1$), 9.4\% accuracy improvement (68.2\% to 77.6\%) can be achieved through increasing the number of scheduled devices from $K=4$ to $K=8$, while another 5.2\% (77.6\% to 82.7\%) can be achieved from increasing the number of scheduled devices up to $K=12$. On the contrary, scheduling different number of devices has very little effect on the accuracy for i.i.d. data sets. Then, the regression results of $\beta,\theta$ in \eqref{app} for different non-i.i.d. level data sets are summarized in Tabel.~\ref{tab1} and will be used in the following experiments.

\begin{table}[!t]
\caption{Regression Results of $\beta,\theta$ for Approximating the Number of  Rounds Needed to Attain a Given Accuracy.}
\label{tab1}
\begin{center}
\begin{tabular}{c|c|c|c|c|c}
\hline
\textbf{Data distribution} & $L=1$ & $L=2$ & $L=5$ & $L=8$ & i.i.d.\\
\hline
$\bm{\beta}$  & 123.127 & 103.783 & 89.154 & 63.919 & 27.773 \\
\hline
$\bm{\theta}$  & -0.0448 & -0.0367 & 0.00934 & 0.139 & 0.941 \\
\hline
\end{tabular}
\end{center}
\end{table}

\subsection{Comparison of Different Scheduling Policies}

\begin{figure*}[!t]
  \centering
  \subfloat[]{\includegraphics[width=1.7in]{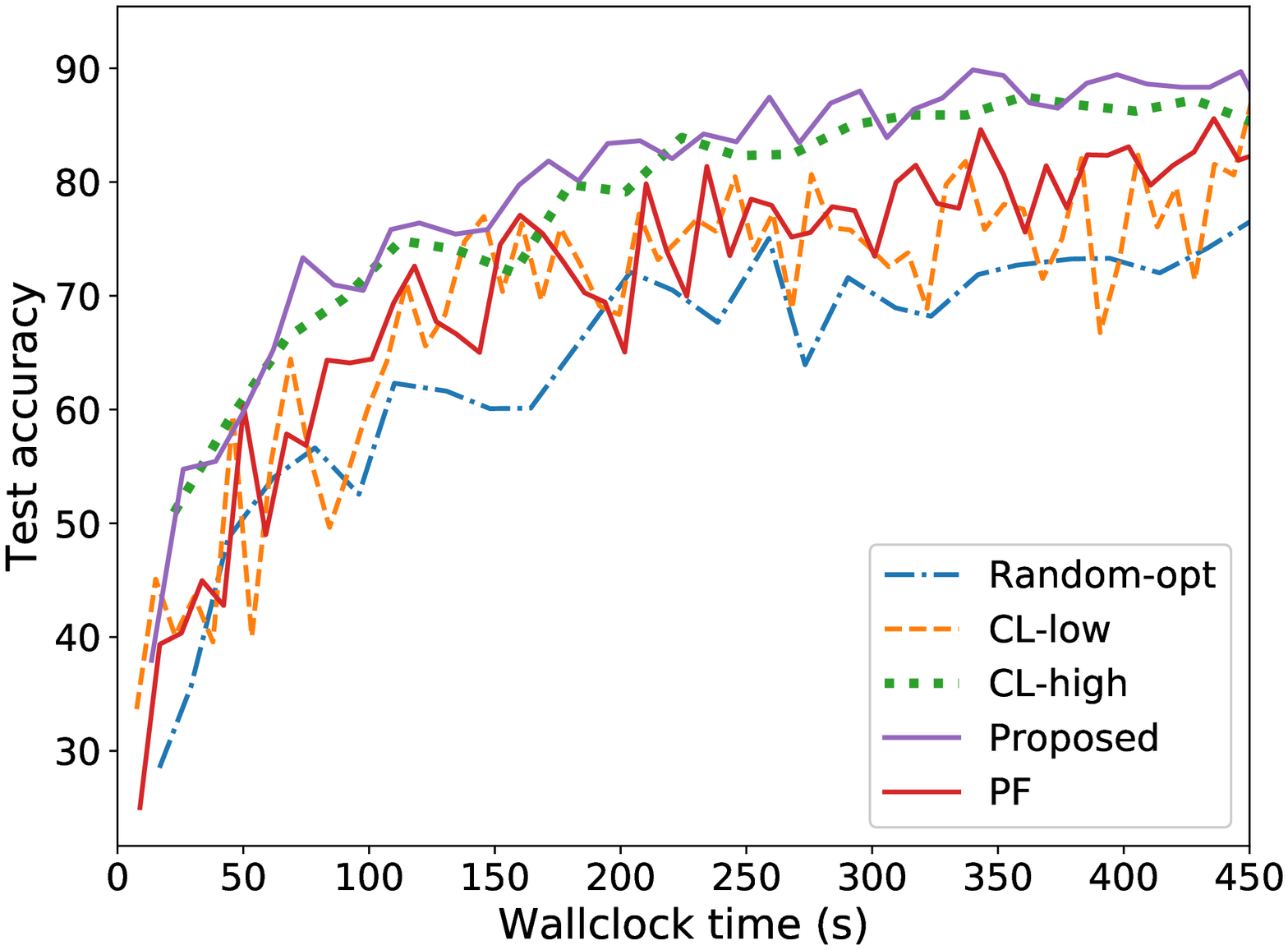}
  \label{comparison_time_vs_acc}}
  \hfil
  \subfloat[]{\includegraphics[width=1.7in]{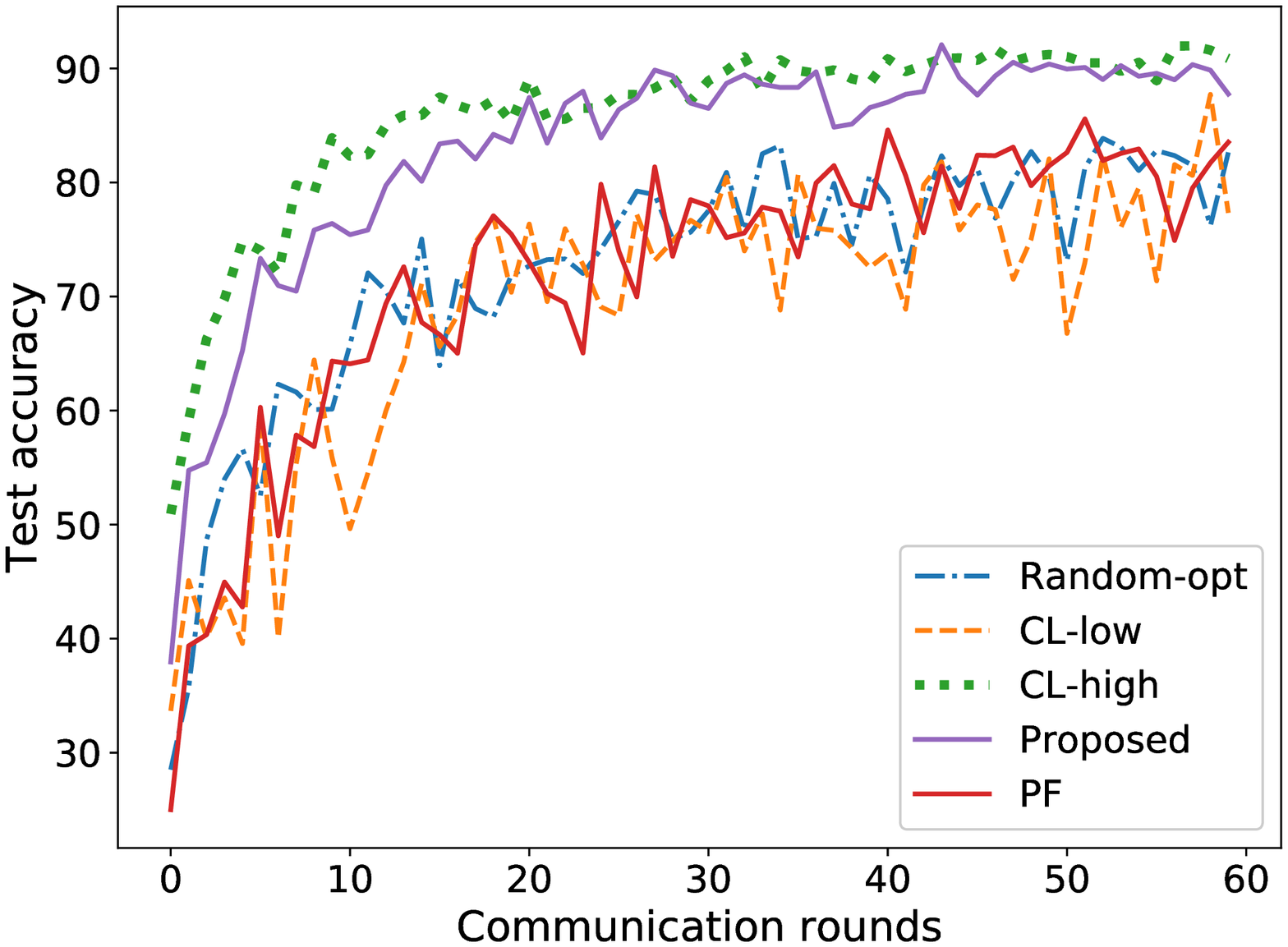}
  \label{comparison_round_vs_acc}}
  \hfil
  \subfloat[]{\includegraphics[width=1.7in]{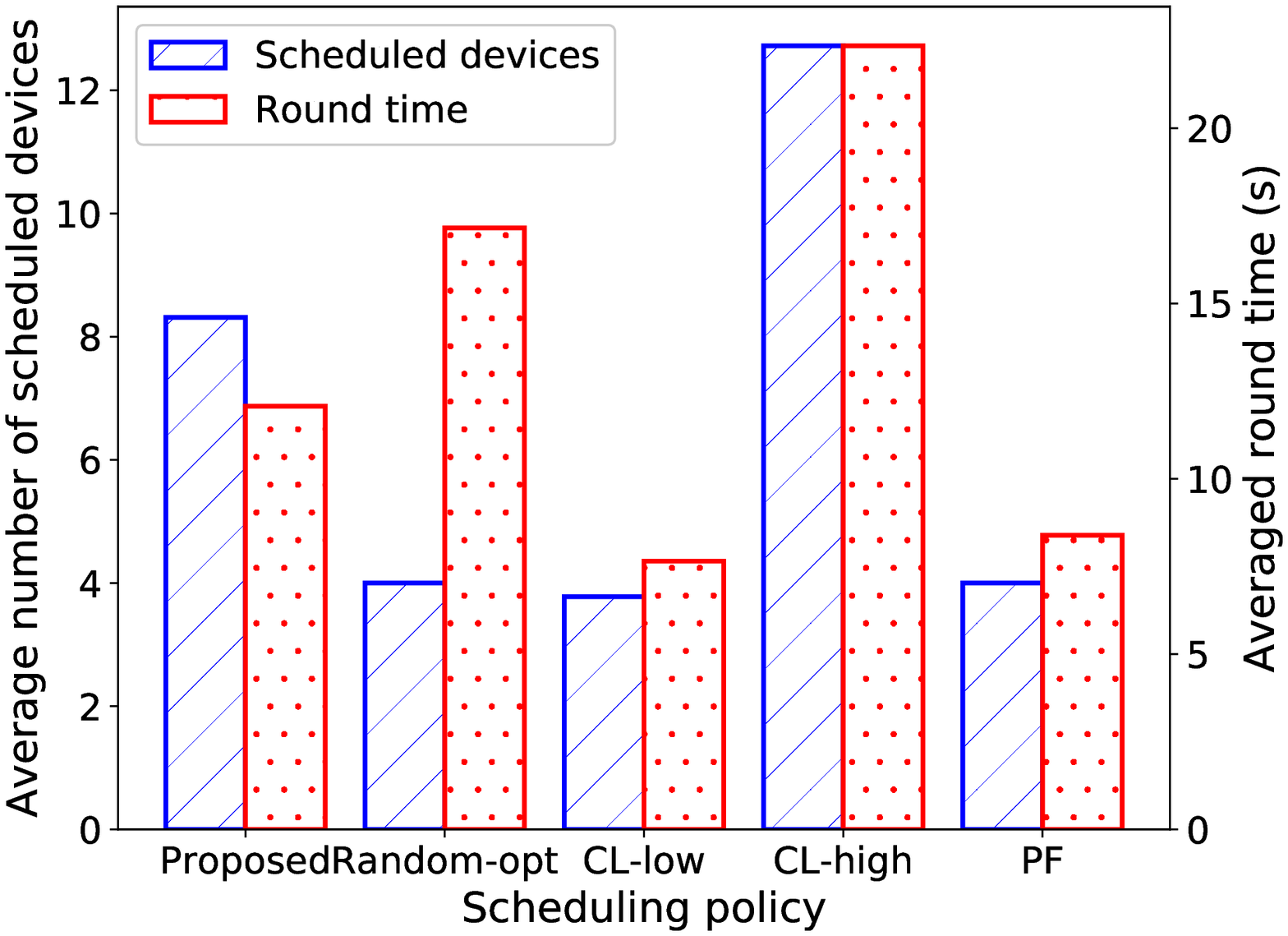}
  \label{chart}}
  \hfil
  \subfloat[]{\includegraphics[width=1.84in]{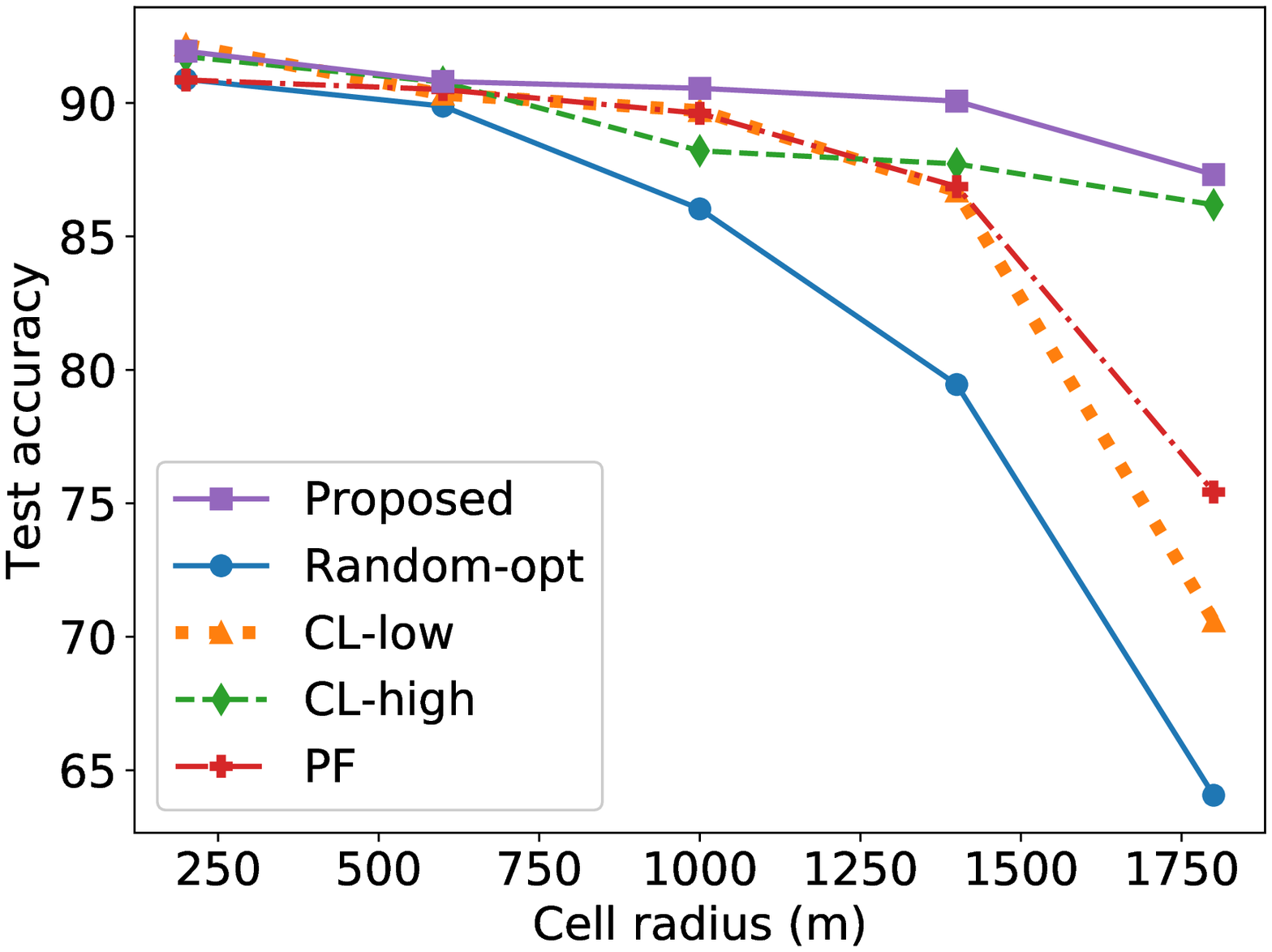}
  \label{different_radius}}
  \caption{ The FL convergence performance under different scheduling polices with the cell radius $R=1400$ meters and non-i.i.d. data set ($L=2$). Results are averaged over 5 trails.
  (a) The test accuracy of the trained model with different scheduling policies v.s. the wallclock time.
  (b) The test accuracy of the trained model with different scheduling policies v.s. the rounds.
  (c) The averaged number of scheduled devices and the corresponding average latency per round w.r.t. different scheduling polices.
  (d) The highest achievable accuracy under a total training time budget equals to 300 seconds v.s. different cell radius.
  }
  \label{fg5}
\end{figure*}

We compare the proposed scheduling policy with 3 baseline scheduling policies. The first baseline policy is the random scheduling policy with the empirically optimal number of scheduled devices $K_\text{opt}$ via experiments. And the second one is the policy proposed in \cite{nishio2019client}, the scheduler iteratively schedules the device that consumes the least time from the set of unscheduled devices until reaching a preset time threshold $T_\text{threshold}$. Here we use two different time thresholds $T_\text{threshold}^\text{low}=8$ seconds and $T_\text{threshold}^\text{high}=25$ seconds, namely CL-low and CL-high, respectively. While the last baseline is the proportional fair policy (PF) proposed in \cite{yang2019scheduling}, that schedules $K$ devices with the best instantaneous channel conditions out of all $M$ devices. In the experiments, we set $K=K_\text{opt}$.

The convergence performances w.r.t. the wallclock time under different scheduling polices are reported in Fig.~\ref{fg5}(a). We can observe from Fig.~\ref{fg5}(a) that the proposed scheduling policy reaches 80\% test accuracy after 171 seconds of FL training, while CL-high needs 224 seconds to attain a similar accuracy and other policies are slower than CL-high. Also note that under a given training time budget (450 seconds), the highest achievable accuracy is 89.85\% with the proposed scheduling policy, that is 14.8\%, 7.47\%, 2.35\% and 3.28\% higher than Random-opt, CL-low, CL-high and PF, respectively. The advantage of the proposed policy is twofold: On one hand, by scheduling the device with better channel condition and lower computation latency, the proposed policy is able to schedule on average 8.31 devices per round within 12.07 seconds as shown in Fig.~\ref{fg5}(c). On the other hand, the proposed policy can achieve a better trade-off between the learning efficiency and latency per round. As shown in Fig.~\ref{fg5}(b) and Fig.~\ref{fg5}(c), although CL-low can reduce the average latency per round to 7.65 seconds, it can only achieve 75\% accuracy within 30 rounds. On the contrary, CL-high converges the fastest w.r.t.  rounds but suffers from longer latency per round (up to 22.72 seconds). As a result, both CL-high and CL-low converge slower than the proposed policy w.r.t. the wallclock time.

The highest achievable accuracy within 300 seconds versus the cell radius $R$ is illustrated in Fig.~\ref{fg5}(d). The results show that the proposed policy can adapt to different cell radii and achieve higher accuracy than all other baseline policies.
It is shown that the advantage of the proposed policy over the baselines will diminish as we reduce the cell radius.
This is because when the cell radius is small, the communication latency will decrease. Therefore all policies tend to schedule more devices, and thus the difference in the convergence rate w.r.t. rounds caused by scheduling different number of devices will also decrease.
We also notice that PF performs similar to Random-opt when the cell radius is small. This is because the computation latency will dominant when the cell radius is small and PF does not take the computation latency into consideration. However, because PF can avoid scheduling the device with bad channel condition, it outperforms Random-opt obviously when the cell radius is large.

\section{Conclusion}
In this paper, we consider a joint bandwidth allocation and scheduling problem to maximize the convergence rate of FL training.
By revealing the trade-off between the learning efficiency and latency per round, a joint bandwidth allocation and scheduling policy is proposed.
The experiments show that the model accuracy under 450 seconds training time budge of the proposed policy is 14.8\%, 7.47\%, 2.35\% and 3.28\% higher than the baseline policies (Random-opt, CL-low, CL-high and PF, respectively).
Further, the proposed policy can adapt to different cell radii and achieve persistently higher accuracy than the baselines.
In the future, the heterogeneity of the system including the different computation capabilities and mobility patterns of the mobile devices can be considered.





%
\bibliographystyle{IEEEtran}
\bibliography{reference}

\end{document}